%% file: main.tex
\documentclass{article}
\usepackage[utf8]{inputenc}
\usepackage{amstext,amsmath,amssymb,amsgen,amsfonts,latexsym,theorem,pifont}

\usepackage{graphicx}
\usepackage{xcolor}
\usepackage{tikz}
\usetikzlibrary{arrows.meta,positioning,fit,calc}
\usepackage{float}

\usepackage{mytheorems}
\usepackage{mylayout}

\begin{document}
\sloppy
\title{
Classical Hardness of Learning Functions of Hamiltonians
}
\author{
  Sota Hashimoto\footnote{Graduate School of Engineering, Mie University, \texttt{426M527@m.mie-u.ac.jp}}
  \and
  Akinori Kawachi\footnote{Graduate School of Engineering, Mie University, \texttt{kawachi@info.mie-u.ac.jp}}
}
\date{}
\maketitle
\thispagestyle{plain}
\pagestyle{plain}

\begin{abstract}
\input{abstract}
\end{abstract}
\section{Introduction}\label{sec:intro}
\input{intro}

\section{Preliminaries}\label{sec:prelim}
\input{prelim}

\section{Main Result}\label{sec:main}
\input{result}
$\mbox{}$\\
\noindent{\bf Acknowledgements.}
This work is partially supported by 
JSPS Grant-in-Aid for Scientific Research (A) Nos.~21H04879, 23H00468, (C) No.~21K11887, JSPS Grant-in-Aid for Challenging Research (Pioneering) No.~23K17455, and MEXT Quantum Leap Flagship Program (MEXT Q-LEAP) Grant Number JPMXS0120319794. 

$\mbox{}$\\
\noindent{\bf AI Disclosure.}
During the preparation of this work, we used ChatGPT 6 Astra and
Claude Fable 5.1 to help refine technical ideas, support mathematical
analyses, and assist with editing and reviewing the \LaTeX{} source files.
This assistance concerned the development and refinement of technical
ideas and analyses, as well as the preparation and revision of the
manuscript source.
The authors take full responsibility for the correctness, originality,
and integrity of the manuscript, including its mathematical claims,
proofs, and references.

\bibliographystyle{alpha}
\bibliography{mybib}

\end{document}

%% file: abstract.tex
Morohoshi, Nakayama, Manabe, and Mitarai~\cite{Morohoshi_2025} proposed a physically motivated quantum machine learning problem in which the goal is to predict quantities of the form $\operatorname{Tr}[f(H)\rho]$ from classical descriptions of a Hamiltonian $H$ and a quantum state $\rho$, where $f$ is an unknown function. We call this problem Hamiltonian function learning in this paper. They constructed an efficient quantum learning algorithm under suitable conditions, while leaving a rigorous proof of average-case classical hardness open.
In this paper, we rigorously prove the average-case classical hardness for two distribution-specific Hamiltonian function learning problems for $f_{\cos,\pi}(\lambda)=\cos(\pi\lambda)$ and $f_{\exp,\beta}(\lambda)=e^{-\beta\lambda}$ discussed in~\cite{Morohoshi_2025} under the assumption of the average-case hardness of factoring random RSA moduli. More specifically, we show that an efficient classical randomized learner under squared loss whose output hypotheses are evaluable in classical polynomial time for either problem would yield a classical randomized polynomial-time algorithm for factoring random RSA moduli.

%% file: intro.tex
Quantum machine learning (QML) studies learning tasks involving quantum computation or quantum systems~\cite{Biamonte_2017,Havlivcek_2019,Schuld_2019,Caro_2022}. A central theoretical goal is to determine when quantum computation can provide a provable advantage over classical randomized polynomial-time algorithms. Whether such a separation exists depends on the precise learning model, including the input and data-access models, the performance measure, and the class of classical algorithms used for comparison.

Cryptographic assumptions have long been used to establish computational hardness in learning. Kearns and Valiant gave representation-independent hardness results for the probably approximately correct (PAC) learning by reducing cryptographic problems, including breaking RSA and factoring Blum integers, to learning problems~\cite{Kearns_1994}, and Servedio and Gortler obtained quantum--classical computational separations for PAC learning under the hardness of factoring Blum integers and for exact learning from membership queries under the existence of one-way functions~\cite{Servedio_2004}.  More recently, Liu, Arunachalam, and Temme~\cite{Liu_2021} established such a separation based on the hardness of the discrete logarithm problem, and Gyurik and Dunjko~\cite{Gyurik_2023} examined how the learning model determines where the classical difficulty arises. These computational separations should be distinguished from sample-complexity advantages, which are known to be limited: quantum and classical sample complexities coincide up to constant factors in the PAC and agnostic models~\cite{Arunachalam_2018}, and, for predicting outcomes of quantum experiments with small average error, classical learners need only a number of experiments comparable to that of quantum learners~\cite{Huang_2021}.

Morohoshi, Nakayama, Manabe, and Mitarai~\cite{Morohoshi_2025} introduced the learning problem studied in this paper, aiming at a QML problem of practical relevance. In contrast to cryptographically constructed learning problems, their problem is motivated by a physical setting in which the Hamiltonian of a quantum system is known, but the relation between the Hamiltonian and experimental outputs is not. They model such an output as a quantity of the form $\operatorname{Tr}[f(H)\rho]$, where the unknown function $f$ represents this relation, and the goal is to predict this quantity from a Hamiltonian $H$ and a quantum state $\rho$, given labeled examples.
We call this problem Hamiltonian function learning in this paper. They constructed an efficient quantum learning algorithm under suitable conditions on the unknown function $f$, the input Hamiltonian $H$, and the input quantum state $\rho$. Furthermore, they verified the feasibility of their proposed algorithm by implementing it for $f(\lambda)=e^{-\beta\lambda}$ on a superconducting device using up to 40 qubits. 

They also discussed the classical hardness of their learning task as evidence for a quantum advantage in QML.
For $f(\lambda)=\cos(t\lambda)$, they connected $\operatorname{Tr}[\cos(tH)\rho]=\operatorname{Re}\operatorname{Tr}[e^{-itH}\rho]$ to Hamiltonian simulation and argued that predicting this quantity should be classically hard for $t=\operatorname{poly}(n)$. 

However, they left a rigorous proof of average-case classical hardness for Hamiltonian function learning open. In this work, we present such a result under the assumption that factoring random RSA moduli is classically hard on average. We construct a distribution-specific $f_{\cos,\pi}$-Hamiltonian function learning problem and, for every fixed rational constant $\beta>0$, a distribution-specific $f_{\exp,\beta}$-Hamiltonian function learning problem, both of which are efficiently learnable by their quantum algorithm. Under the above assumption, neither problem is efficiently learnable by a classical randomized algorithm under squared loss when the output hypotheses are required to be classically evaluable. In particular, the hardness holds even though $f$ is fixed in advance and known to the learner. The Hamiltonians in our hard instances are projectors specified succinctly by polynomial-size quantum circuits rather than local Hamiltonians.

Related quantum--classical learning separations have been established in several settings. P\'erez-Guijarro, Pag\`es-Zamora, and Fonollosa studied conditions relating quantum learning speedups to quantum computational speedups and applied their results to learning problems based on factorization~\cite{PerezGuijarro_2024}. Yamasaki, Isogai, and Murao provided a framework for deriving supervised-learning advantages from general quantum computational advantages and constructed corresponding data-preparation protocols~\cite{Yamasaki_2023}. Molteni, Gyurik, and Dunjko proved quantum--classical separations for learning unknown quantum observables in a distribution-free PAC model under the assumption $\mathsf{BQP}\not\subseteq\mathsf{P/poly}$~\cite{Molteni_2025}. Barthe, Yaghubi Rad, Grossi, and Dunjko studied supervised learning of unknown Hamiltonian dynamics and proved quantum--classical learning separations under complexity-theoretic assumptions~\cite{Barthe_2025}.

Our result differs from these works in several respects. P\'erez-Guijarro et al.~require pointwise accuracy on every input in the support, whereas our hardness result holds even for learners that are only required to achieve small expected squared loss. 
Moreover, their explicit separation examples are based on arithmetic functions of prime factors, whereas we establish hardness for the physically motivated problem introduced by \cite{Morohoshi_2025}.
Yamasaki et al.~give a more general framework for deriving learning advantages from quantum computational advantages, assuming hardness against classical polynomial-time algorithms even with polynomial-length randomized advice, as captured by $\mathsf{HeurFBPP/rpoly}$. 
In our setting, the reduction requires only average-case hardness against uniform classical randomized polynomial-time algorithms. All training examples in our construction can also be generated in classical polynomial time.
Moreover, the reduction is not actually specific to factoring: it extends to problems with a polynomial-length witness uniquely determined by each instance, for which instance--witness pairs are classically samplable, the witness can be recovered in quantum polynomial time and verified in classical polynomial time, and its recovery is hard on average for classical randomized polynomial-time algorithms; see Remark~\ref{rem:generalization}. Molteni et al.~obtain a distribution-free separation under the different assumption $\mathsf{BQP}\not\subseteq\mathsf{P/poly}$, whereas our hardness result holds for expected squared loss under a single explicit, efficiently samplable input distribution. Barthe et al.~study the different problem of learning unknown Hamiltonian dynamics, whereas our result concerns the Hamiltonian function learning problem introduced by Morohoshi et al. Consequently, our result gives rigorous average-case classical hardness even when $f$ is fixed in advance; for $f_{\cos,\pi}$, it applies to the fixed choice $t=\pi$, and for $f_{\exp,\beta}$, it includes the choice of $f$ considered in the experiment on quantum hardware of Morohoshi et al.

Our proof follows the cryptographic reduction strategy of Kearns and Valiant~\cite{Kearns_1994}. To outline the reduction, let $\widetilde{\mathcal D}_n$ be the efficiently samplable joint distribution over triples $(N,p,q)$ generated by a fixed RSA modulus generator~\cite{Rivest_1978}, where $N=pq$, and let $\mathcal D_n$ be its marginal distribution on $N$. The distinction between the two distributions is essential for the reduction: independent samples from $\widetilde{\mathcal D}_n$ provide the factors needed to generate training labels, whereas a factoring challenge provides only $N\sim\mathcal D_n$. We assume that, given such a random challenge modulus, no classical randomized polynomial-time algorithm can output a nontrivial factorization with non-negligible probability.

Our construction uses a verification circuit based on Shor's factoring algorithm~\cite{Shor_1997}. Given an RSA modulus $N$, a bit position $j$, and a bit $b\in\{0,1\}$, we construct a polynomial-size quantum circuit $U_{N,j,b}$ whose acceptance probability is overwhelmingly close to one if $b$ equals the $j$-th bit of $\min\{p,q\}$ and overwhelmingly close to zero otherwise. From this circuit, we define the non-local projector Hamiltonian $H_{N,j,b}:=U_{N,j,b}^{\dagger}\Pi_{\mathrm{acc}}U_{N,j,b}$, where $\Pi_{\mathrm{acc}}$ is the acceptance projector, and use the all-zero initial state $\rho_0$.

Since $H_{N,j,b}$ is a projector, both $\operatorname{Tr}[\cos(\pi H_{N,j,b})\rho_0]$ and $\operatorname{Tr}[e^{-\beta H_{N,j,b}}\rho_0]$ are affine functions of the acceptance probability of $U_{N,j,b}$. In particular, the target values for $b=0$ and $b=1$ are separated by a constant: when $b$ equals the $j$-th bit of $\min\{p,q\}$, the two target values are close to $-1$ and $e^{-\beta}$, respectively, whereas both are close to $1$ otherwise.

Now suppose that an efficient classical learner exists for either of the two constructed problems. Given a challenge modulus $N\sim\mathcal D_n$, a classical factoring algorithm generates an i.i.d.~training sample from independently drawn triples $(N_i,p_i,q_i)\sim\widetilde{\mathcal D}_n$ and runs the learner to obtain a hypothesis $g$. It then compares $g(x_{N,j,0})$ and $g(x_{N,j,1})$ for each bit position $j$. Sufficiently small inverse polynomial squared loss ensures that, with probability at least $2/3$, these comparisons recover all bits of $\min\{p,q\}$ and hence a factor of $N$. Figure~\ref{fig:classical-hardness-reduction} summarizes the reduction.

\begin{figure}[H]
\centering
\begin{tikzpicture}[
  >=Latex,
  font=\small,
  trainbox/.style={
    draw,
    rounded corners=2pt,
    align=center,
    text width=25mm,
    minimum height=16mm,
    inner sep=3pt
  },
  learnerbox/.style={
    draw,
    rounded corners=2pt,
    align=center,
    text width=25mm,
    minimum height=16mm,
    inner sep=3pt
  },
  querybox/.style={
    draw,
    rounded corners=2pt,
    align=center,
    text width=25mm,
    minimum height=16mm,
    inner sep=3pt
  },
  recoverbox/.style={
    draw,
    rounded corners=2pt,
    align=center,
    text width=25mm,
    minimum height=16mm,
    inner sep=3pt
  },
  io/.style={
    align=center,
    text width=17mm,
    font=\small
  },
  arr/.style={
    -Latex,
    semithick
  }
]

\node[trainbox] (train) {
  Generate sample $S$\\
  from i.i.d.~triples\\
  $(N_i,p_i,q_i)\sim\widetilde {\mathcal {D}}_n$
};

\node[learnerbox, right=3.5mm of train] (learner) {
  Run learner $\mathcal {A}$\\
  on $(1^n,\varepsilon(n),\delta(n),S)$\\
  to obtain $g$
};

\node[querybox, right=3.5mm of learner] (query) {
  For each $j$, construct\\
  $x_{N,j,0},x_{N,j,1}$\\
  and evaluate $g$
};

\node[recoverbox, right=3.5mm of query] (recover) {
  Recover all $\widehat z_j$,\\
  reconstruct and verify\\
  the factor $\widehat p$
};

\coordinate (routelevel) at ([yshift=-2mm]train.south);

\node[
  draw,
  rounded corners=3pt,
  fit=(train)(learner)(query)(recover)(routelevel),
  inner sep=4.5pt,
  label={[font=\small\bfseries]above:
    Factoring algorithm $\mathcal {B}$}
] (B) {};

\node[io, anchor=east] (input)
  at ($(B.west |- routelevel)+(-3mm,0)$) {
  challenge\\
  modulus $N$
};

\node[io, anchor=west] (output)
  at ($(B.east |- recover.east)+(3mm,0)$) {
  factors\\
  of $N$
};

\draw[arr] (train.east) -- (learner.west);
\draw[arr] (learner.east) -- (query.west);
\draw[arr] (query.east) -- (recover.west);

\draw[arr]
  (input.east)
  -- (B.west |- routelevel)
  -- (query.south |- routelevel)
  -- (query.south);

\draw[arr] (recover.east) -- (output.west);

\end{tikzpicture}

\caption{
Reduction from an efficient learner $\mathcal {A}$ to a
classical factoring algorithm $\mathcal {B}$.
}
\label{fig:classical-hardness-reduction}
\end{figure}

%% file: prelim.tex
We first define basic notions and notation used throughout this paper.
For a positive integer $n$, let $\operatorname{poly}(n)$ denote an unspecified polynomial in $n$. 
For $L\in\mathbb{N}$, we write $[L]:=\{1,2,\ldots,L\}$. For an integer $z$ with $0\le z<2^L$ and $j\in[L]$, let $\operatorname{bit}^{(L)}_j(z)$ denote the $j$-th bit of the $L$-bit binary representation of $z$, padded with leading zeros if necessary. We count bit positions from the least significant bit. Hence, $j=1$ denotes the least significant bit. When $L$ is clear from context, we write $\operatorname{bit}_j(z)$.

For an operator $X$, let $\operatorname{spec}(X)$ denote its spectrum.
For a function $f: \mathbb{R}\to\mathbb{R}$ and a Hermitian operator $H$ with spectral decomposition $H=\sum_{j}\lambda_j\,\Pi_j$, where $\lambda_j$ are the distinct eigenvalues of $H$ and $\Pi_j$ are the corresponding orthogonal projectors, we define
$f(H):=\sum_{j} f(\lambda_j)\,\Pi_j$. For instance, $f(\lambda)=\cos(t\lambda)$ yields $f(H)=\cos(tH)=\operatorname{Re}\,e^{-itH}$, and $f(\lambda)=e^{-\beta\lambda}$ yields the (unnormalized) Gibbs operator $e^{-\beta H}$.

A function $\nu:\mathbb{N}\to\mathbb{R}_{\ge 0}$ is negligible if, for every positive polynomial $p$, there exists $n_0$ such that $\nu(n)\le 1/p(n)$ for all $n\ge n_0$. We write $\operatorname{negl}(n)$ for an unspecified negligible function. A classical randomized polynomial-time algorithm means a uniform probabilistic algorithm running in time polynomial in the input length. A quantum polynomial-time algorithm means a uniform family of quantum circuits of polynomial size. Unless otherwise specified, all algorithms are uniform, and all probabilities are taken over both the randomness of the relevant distributions and the internal randomness of the algorithms.

We next formalize the regression problems considered in this paper. A learner receives labeled training examples $(x,y)$ drawn from a fixed distribution $\mathcal P$ and outputs a hypothesis $g$ that approximates a real-valued target function $\phi$ under squared loss.

\begin{definition}[Distribution-specific regression problem]\label{def:Distribution-specific regression problem}

A distribution-specific regression problem is a family $\mathcal{R}={(\mathcal X_n,\phi_n,\mathcal{P}_n)}_{n\in\mathbb{N}}$, where, for each $n$, $\mathcal X_n\subseteq \{0,1\}^{\ell(n)}$ is a finite instance space ($\ell(n)=\operatorname{poly}(n)$), $\phi_n:\mathcal X_n\to[-1,1]$ is a target function, and $\mathcal{P}_n$ is a distribution over $\mathcal X_n\times \mathcal Y_n$, for some finite $\mathcal Y_n\subseteq[-1,1]\cap\mathbb{Q}$ whose elements have bit length polynomial in $n$. Let $\mu_n:=\mathcal{P}_n|_{\mathcal X_n}$ denote the marginal of $\mathcal{P}_n$ on $\mathcal X_n$.
For a hypothesis $g:\mathcal X_n\to\mathbb{R}$, define the expected squared loss of the hypothesis as 
\[
R_{\mu_n}^{\phi_n}(g):=\mathbb{E}_{x\sim\mu_n}\left[\bigl(g(x)-\phi_n(x)\bigr)^2\right].
\]
The label $y$ in an example $(x,y)\sim \mathcal{P}_n$ need not equal $\phi_n(x)$.
For brevity, we write $R_{\mu_n}(g):=R_{\mu_n}^{\phi_n}(g)$ if the target function is clear from the context.

\end{definition}

We formulate the $f$-Hamiltonian function learning problems studied
in this paper as distribution-specific regression problems as follows.

\begin{definition}[$f$-Hamiltonian function learning problem]
\label{def:Hamiltonian function learning instance}
Let $f:[0,1]\to[-1,1]$ be a function. An $f$-Hamiltonian function learning ($f$-HFL) problem is a distribution-specific regression problem $\mathcal{L}=(\mathcal{X}_n,F_n,\mathcal P_n)_{n\in\mathbb{N}}$ such that each $x\in\mathcal{X}_n$ is a polynomial-length classical description of a pair of a Hamiltonian $H_x$ and a density operator $\rho_x$ on $m(n)=\operatorname{poly}(n)$ qubits, with
$\operatorname{spec}(H_x)\subseteq[0,1]$. The target function is 
$F_n(x):=\operatorname{Tr}[f(H_x)\rho_x]$. 
\end{definition}
For $t\in\mathbb{R}$ and a rational $\beta>0$, define $f_{\cos,t}(\lambda):=\cos(t\lambda)$ and $f_{\exp,\beta}(\lambda):=e^{-\beta\lambda}$.
We refer to the corresponding $f$-HFL problems as
$f_{\cos,t}$-HFL and $f_{\exp,\beta}$-HFL problems, respectively.
Their target functions are $F_n^{\cos,t}(x):=\operatorname{Tr}[\cos(tH_x)\rho_x]$ and $F_n^{\exp,\beta}(x):=\operatorname{Tr}[e^{-\beta H_x}\rho_x]$, respectively. In the two problems constructed in Section~\ref{sec:main}, the training labels approximate the corresponding target values up to negligible error.

To formulate our classical hardness result, we define efficient classical learnability under squared loss.

\begin{definition}[Efficient classical learnability under squared loss]\label{def:Efficient classical learner under squared loss}

Let $\mathcal{L}={(\mathcal X_n,F_n,\mathcal{P}_n)}_{n\in\mathbb{N}}$ be an $f$-HFL problem. We say that $\mathcal{L}$ is efficiently learnable by a classical randomized algorithm under squared loss if there exist a polynomial $M$ and a classical randomized polynomial-time algorithm $\mathcal{A}$ such that the following holds. For every $n\in\mathbb{N}$ and rational $\varepsilon,\delta\in(0,1)$, the learner $\mathcal{A}$, on input $1^n$, $\varepsilon$, $\delta$, and an i.i.d.~training sample
\[
S=\{(x_i,y_i)\}_{i=1}^{M(n,\varepsilon^{-1},\delta^{-1})}
\sim
\mathcal{P}_n^{\otimes M(n,\varepsilon^{-1},\delta^{-1})},
\]
outputs a description $s$ of a hypothesis $g_s:\mathcal X_n\to[-1,1]\cap\mathbb{Q}$ such that the mapping $(1^n,s,x)\mapsto g_s(x)$ is computable by a single deterministic polynomial-time algorithm. With probability at least $1-\delta$ over $S$ and the randomness of $\mathcal{A}$, $R_{\mu_n}(g_s)\le \varepsilon$.

\end{definition}

The polynomial-time evaluation requirement is essential for our reduction, which evaluates the learned hypothesis on inputs constructed from the challenge modulus. 

Our classical hardness results rely on the average-case hardness of integer factorization. We therefore introduce the RSA modulus distribution and the corresponding hardness assumption.

\begin{definition}[RSA modulus distribution]\label{def:RSA modulus distribution}

Let $\operatorname{GenRSA}$ be a classical randomized polynomial-time RSA modulus generator that outputs a triple $(N,p,q)$ on input $1^n$, where $p$ and $q$ are distinct odd primes of $\Theta(n)$ bits and $N=pq$. We denote by $\widetilde{\mathcal{D}}_n$ the joint distribution of $(N,p,q)$ generated by $\operatorname{GenRSA}(1^n)$. We denote by $\mathcal{D}_n$ the marginal distribution of $N$. We also fix a polynomial-time computable, polynomially bounded function $L:\mathbb{N}\to\mathbb{N}$ such that $\min\{p,q\}<2^{L(n)}$ for every $(N,p,q)\in\operatorname{supp}(\widetilde{\mathcal{D}}_n)$.

\end{definition}
When referring to the bits of $\min\{p,q\}$, we use its $L(n)$-bit binary representation padded with leading zeros.

For simplicity, we idealize $\operatorname{GenRSA}$ as always returning a valid triple. 
A bounded-time implementation may instead output a failure symbol with negligible probability. 
Including this failure event explicitly changes the probability bounds below only by a negligible additive term.

The distinction between $\widetilde{\mathcal D}_n$ and $\mathcal D_n$ is important for the reduction. In standard RSA modulus generation~\cite{Rivest_1978}, the factors are generated first and the modulus $N=pq$ is then computed efficiently; primality of candidate factors can be tested in polynomial time, for example, by the AKS primality test~\cite{Agrawal_2004}. Thus, a sample generator drawing $(N,p,q)\sim\widetilde{\mathcal D}_n$ obtains the factors and can use them to generate training labels. In the average-case factoring problem, however, the algorithm receives only $N\sim \mathcal D_n$. Efficient joint generation of $(N,p,q)$ does not provide an efficient procedure for recovering $p$ and $q$ from a given challenge modulus $N$.

We formalize the average-case hardness assumption as follows.

\begin{assumption}[Average-case factoring hardness]\label{ass:Average-case hardness}

For every classical randomized polynomial-time algorithm $\mathcal{B}$, there exists a negligible function $\nu$ such that, for all sufficiently large $n$,
\[
\Pr_{\substack{(N,p,q)\sim\widetilde{\mathcal{D}}_n,\, \mathcal{B}}}
\left[
\exists\,p',q'\in\mathbb{N}:
\mathcal B(1^n,N)=(p',q'),\
1<p',q'<N,\
p'q'=N
\right]
\le \nu(n).
\]
Equivalently, given only a random RSA modulus $N\sim \mathcal{D}_n$, no classical randomized polynomial-time algorithm outputs a nontrivial factorization of $N$ with non-negligible probability.

\end{assumption}

%% file: result.tex
Under Assumption~\ref{ass:Average-case hardness}, we establish the following average-case classical hardness result for two explicit $f$-HFL problems.

\begin{theorem}[Main theorem]
\label{thm:main}
For every fixed rational constant $\beta>0$, there exist an $f_{\cos,\pi}$-HFL problem $\mathcal L^{\cos,\pi}$ and an $f_{\exp,\beta}$-HFL problem $\mathcal L^{\exp,\beta}$ such that, if either problem is efficiently learnable by a classical randomized algorithm under squared loss in the sense of Definition~\ref{def:Efficient classical learner under squared loss}, then there exists a classical randomized polynomial-time algorithm $\mathcal B$ such that, for all sufficiently large $n$,
\[
    \Pr_{\substack{(N,p,q)\sim\widetilde{\mathcal{D}}_n,\, \mathcal B}}
    \left[
    \exists\,p',q'\in\mathbb{N}:
    \mathcal{B}(1^n,N)=(p',q'),\
    1<p',q'<N,\
    p'q'=N
    \right]
    \ge \frac{2}{3}.
\]
Consequently, under Assumption~\ref{ass:Average-case hardness}, neither problem is efficiently learnable by a classical randomized algorithm under squared loss in the sense of Definition~\ref{def:Efficient classical learner under squared loss}.
\end{theorem}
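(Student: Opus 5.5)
The plan follows the Kearns--Valiant strategy outlined in the introduction. It has three parts: construct the instances, show the targets separate the two candidate bits by a constant, and run the factoring reduction.

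\emph{Construction.} For an RSA modulus $N$, $j\in[L(n)]$ and $b\in\{0,1\}$, build a polynomial-size quantum circuit $U_{N,j,b}$ on $m(n)=\operatorname{poly}(n)$ qubits in four steps:
\begin{itemize}
\item run Shor's order-finding/factoring routine $\operatorname{poly}(n)$ times in parallel, amplified so that it outputs a nontrivial factor with probability $1-\operatorname{negl}(n)$;
\item classically and reversibly check each candidate by trial division and compute $\min\{p,q\}$;
\item compare its $j$-th bit with $b$ and write the result into an output qubit;
\item uncompute is not needed, since we only need the acceptance probability.
\end{itemize}
Let $a_{N,j,b}:=\operatorname{Tr}[\Pi_{\mathrm{acc}}U_{N,j,b}\rho_0U_{N,j,b}^\dagger]$. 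Then $a_{N,j,b}\ge 1-\operatorname{negl}(n)$ if $b=\operatorname{bit}_j(\min\{p,q\})$, and $a_{N,j,b}\le\operatorname{negl}(n)$ otherwise.

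Set $H_x:=U^\dagger\Pi_{\mathrm{acc}}U$ and $\rho_x:=\rho_0$, where $x=x_{N,j,b}$ is the classical description of the circuit. $H_x$ is a projector, so $\operatorname{spec}(H_x)\subseteq\{0,1\}$ and $f(H_x)=f(0)(I-H_x)+f(1)H_x$. Hence
\[
F_n(x)=f(0)+(f(1)-f(0))\,a_{N,j,b},
\]
which gives $1-2a$ for $f_{\cos,\pi}$ and $1-(1-e^{-\beta})a$ for $f_{\exp,\beta}$. The distribution $\mathcal P_n$ is sampled as follows:
\begin{itemize}
\item draw $(N,p,q)\sim\widetilde{\mathcal D}_n$, $j$ uniform in $[L(n)]$, and $b$ uniform;
\item output $x_{N,j,b}$ with label $y=F$ evaluated at the ideal value $a\in\{0,1\}$ determined by $b\stackrel{?}{=}\operatorname{bit}_j(\min\{p,q\})$.
\end{itemize}
This label is rational for $\cos$. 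For $\exp$, use a fixed rational approximation of $e^{-\beta}$ to $\operatorname{negl}$ precision, computable since $\beta$ is a rational constant. The label then differs from $F_n(x)$ by only $\operatorname{negl}(n)$, as the preliminaries allow, and sampling is classical polynomial time. Quantum learnability is not part of the theorem, so it can be omitted or remarked on.

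\emph{Reduction.} Suppose $\mathcal A$ learns with polynomial $M$. Set $\varepsilon:=c/(8L(n))$ for a suitable constant $c$ and $\delta:=1/6$. Given a challenge $N$, the algorithm $\mathcal B$ does the following:
\begin{itemize}
\item samples $M$ examples using fresh triples from $\operatorname{GenRSA}$;
\item runs $\mathcal A$ to obtain $s$;
\item for each $j$, sets $\widehat z_j:=\arg\min_b g_s(x_{N,j,b})$ (the target for the correct bit is near $f(1)<f(0)$, since $f(1)-f(0)$ is $-2$ or $-(1-e^{-\beta})$);
\item assembles $\widehat p$ from the bits $\widehat z_j$, checks $1<\widehat p<N$ and $\widehat p\mid N$, and outputs $(\widehat p,N/\widehat p)$.
\end{itemize}
All steps run in polynomial time because $g_s$ is evaluable deterministically.

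\emph{Analysis (main obstacle).} The difficulty is converting small \emph{average} squared loss over random $(N,j,b)$ into correctness on \emph{all} $2L(n)$ query points for the particular challenge $N$. The challenge $N\sim\mathcal D_n$ has the same distribution as the $N$-component of $\mu_n$, so averaging is legitimate.

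Let $\Delta:=|f(1)-f(0)|$, a positive constant. A bit $\widehat z_j$ can be wrong only if $|g_s(x_{N,j,b})-F_n(x_{N,j,b})|\ge\Delta/2-\operatorname{negl}$ for some $b$, which forces that squared error to be at least $\Delta^2/5$. Define $E(N):=\mathbb E_{j,b}[(g_s-F_n)^2]$. An error anywhere among the $2L$ points gives $E(N)\ge \Delta^2/(10L)$. On the event of probability at least $1-\delta$ where $\mathbb E_N[E(N)]\le\varepsilon$, Markov's inequality gives
\[
\Pr_N\bigl[E(N)\ge\Delta^2/(10L)\bigr]\le 10L\varepsilon/\Delta^2 .
\]
Choosing $\varepsilon=\Delta^2/(60L)$ makes this at most $1/6$. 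There is one subtlety: the challenge $N$ must be independent of the training sample and of $\mathcal A$'s coins. It is, since $\mathcal B$ draws the sample itself. The total failure probability is therefore at most $1/6+1/6+\operatorname{negl}\le 1/3$ for large $n$, which gives success probability at least $2/3$.

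Finally, the success probability $2/3$ contradicts Assumption~\ref{ass:Average-case hardness}, which yields the consequence stated in the theorem.
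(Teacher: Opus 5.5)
Your proposal is correct and follows essentially the same route as the paper: the same Shor-based verification circuit $U_{N,j,b}$, projector Hamiltonian $U^\dagger\Pi_{\mathrm{acc}}U$ and all-zero state, labels generated from fresh RSA triples, pairwise comparison of $g$ on $x_{N,j,0},x_{N,j,1}$, and Markov's inequality on the per-modulus risk $\mathbb{E}_{j,b}[(g-F_n)^2]$ combined with a union bound with $\delta$. Two small fixes are needed: for $f_{\exp,\beta}$ your $\varepsilon=\Delta^2/(60L)$ with $\Delta=1-e^{-\beta}$ is irrational, whereas Definition~\ref{def:Efficient classical learner under squared loss} requires rational $\varepsilon$ (the paper uses the rational lower bound $\gamma_\beta=\beta/(1+\beta)<1-e^{-\beta}$ for exactly this reason), and the $+\operatorname{negl}$ in your final bound does not correspond to any failure event in your argument, yet if kept it would make $1/6+1/6+\operatorname{negl}\le 1/3$ false, so either drop it or leave slack in $\delta$ or $\varepsilon$.
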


\begin{proof}
We construct an $f_{\cos,\pi}$-HFL problem $\mathcal L^{\cos,\pi}$ and an $f_{\exp,\beta}$-HFL problem $\mathcal L^{\exp,\beta}$, and show that an efficient classical learner for either problem yields a classical randomized polynomial-time algorithm $\mathcal B$ for factoring a random modulus drawn from $\mathcal D_n$. 
Given a challenge modulus $N$, the algorithm $\mathcal B$ generates an independent training sample using independent RSA triples and runs the learner to obtain a hypothesis $g$. For each bit position $j$, it then compares $g$ on two inputs $x_{N,j,0}$ and $x_{N,j,1}$ constructed from $N$. The construction below ensures that these comparisons recover the bits of $\min\{p,q\}$ whenever $g$ is sufficiently accurate, after which $\mathcal B$ reconstructs and verifies the factor.
For this reduction, we need the learning problems to satisfy three properties: training samples can be generated efficiently from independently generated RSA triples, each challenge input $x_{N,j,b}$ can be constructed from $N,j,b$ alone without knowing the factors of $N$, and the target values for the two choices of $b$ are sufficiently separated so that their ordering determines the corresponding bit of $\min\{p,q\}$. 

We now construct two $f$-HFL problems with these properties.
Fix $n\in\mathbb{N}$. Draw $(N,p,q)\sim\widetilde{\mathcal{D}}_n$, and choose a bit position $j\in[L(n)]$ and a bit $b\in\{0,1\}$ independently and uniformly at random. Let 
\begin{equation}
z_j:=\operatorname{bit}^{(L(n))}_j(\min\{p,q\}).
\label{eq:estimation-bit}
\end{equation}
From $N,j,b$, we construct a polynomial-size quantum circuit $U_{N,j,b}$ based on Shor's factoring algorithm~\cite{Shor_1997} with success-probability amplification. Any classical random bits used in the factoring procedure are generated coherently using ancillary qubits and retained as part of the workspace, so that $U_{N,j,b}$ is a fixed unitary circuit determined by $N,j,b$. The circuit coherently performs the factoring procedure with designated registers containing candidate factors $(p',q')$ and a designated one-qubit output register $o$ that indicates acceptance or rejection. The classical post-processing and subsequent checks are implemented reversibly. For each repetition, the circuit verifies that $1<p',q'<N$ and $p'q'=N$. If the verification succeeds, it computes $z:=\min\{p',q'\}$ and checks whether $b=\operatorname{bit}^{(L(n))}_j(z)$. The circuit accepts if this condition is satisfied in at least one repetition, and rejects otherwise. Since $N=pq$ with distinct primes $p$ and $q$, every valid nontrivial factorization of $N$ yields the same value $\min\{p,q\}$. By the correctness of Shor's algorithm and the success-probability amplification above, the acceptance probability of $U_{N,j,b}$ is therefore close to one when $b=\operatorname{bit}^{(L(n))}_j(\min\{p,q\})$, and close to zero otherwise.

From $U_{N,j,b}$, we define the projector Hamiltonian 
\begin{equation}
H_{N,j,b}:=U_{N,j,b}^{\dagger}\Pi_{\mathrm{acc}}U_{N,j,b},
\label{eq:projector-hamiltonian}
\end{equation}
where $\Pi_{\mathrm{acc}}$ is the acceptance projector on the designated one-qubit output register $o$. Since $\Pi_{\mathrm{acc}}$ is a projector and $U_{N,j,b}$ is unitary, $H_{N,j,b}$ is also a projector. In particular, we have $H_{N,j,b}^2=H_{N,j,b}$ and $\operatorname{spec}(H_{N,j,b})\subseteq\{0,1\}$. Let $\rho_0:=|0\cdots0\rangle\langle0\cdots0|$ be the all-zero initial state on the registers of $U_{N,j,b}$. 
For each $N\in\operatorname{supp}(\mathcal D_n)$, $j\in[L(n)]$, and $b\in\{0,1\}$, let $x_{N,j,b}$ denote a classical description of the pair $(H_{N,j,b},\rho_0)$, where $H_{N,j,b}$ is represented succinctly by the classical description of the quantum circuit $U_{N,j,b}$ and the acceptance projector $\Pi_{\mathrm{acc}}$.

We define the common instance space by
\[
    \mathcal{X}_n:=
    \left\{
        x_{N,j,b}:N\in\operatorname{supp}(\mathcal{D}_n),\ j\in[L(n)],\ b\in\{0,1\}
    \right\}.
\]
Since $U_{N,j,b}$ has polynomial size, each $x_{N,j,b}$ has polynomial length. We use a canonical encoding padded to a common polynomial length $\ell(n)$, so that $\mathcal{X}_n\subseteq\{0,1\}^{\ell(n)}$. We also pad the circuit registers with unused ancilla qubits so that, for each $n$, all $H_{N,j,b}$ and $\rho_0$ act on a common number $m(n)=\operatorname{poly}(n)$ of qubits.

Define $y^{\cos,\pi}:=1-2\mathbf{1}[b=z_j]$. Let $c_\beta:=e^{-\beta}$. Since $\beta>0$ is a fixed rational constant, for each $n$ we fix a rational number $\widetilde c_{\beta,n}\in[0,1]$ of bit length polynomial in $n$ that can be computed in classical polynomial time and satisfies
$|\widetilde c_{\beta,n}-c_\beta|\le 2^{-n}$. We define $y^{\exp,\beta}:=1-\bigl(1-\widetilde c_{\beta,n}\bigr)\mathbf{1}[b=z_j]$. We denote by $\mathcal P^{\cos,\pi}_n$ the distribution of $(x_{N,j,b},y^{\cos,\pi})$ generated by the procedure above, and by $\mathcal P^{\exp,\beta}_n$ the corresponding distribution of $(x_{N,j,b},y^{\exp,\beta})$. By construction, the two distributions have the same marginal distribution $\mu_n$ on $\mathcal{X}_n$. 
Define the target functions on $\mathcal X_n$ by $F_n^{\cos,\pi}(x):=\operatorname{Tr}[\cos(\pi H_x)\rho_x]$, $F_n^{\exp,\beta}(x):=\operatorname{Tr}[e^{-\beta H_x}\rho_x]$. We then set $\mathcal L^{\cos,\pi}:=(\mathcal X_n,F_n^{\cos,\pi},\mathcal P_n^{\cos,\pi})_{n\in\mathbb N}$, $\mathcal L^{\exp,\beta}:=(\mathcal X_n,F_n^{\exp,\beta},\mathcal P_n^{\exp,\beta})_{n\in\mathbb N}$. 
By construction, these are $f_{\cos,\pi}$-HFL and $f_{\exp,\beta}$-HFL problems, respectively, in the sense of Definition~\ref{def:Hamiltonian function learning instance}.

We next verify that the constructed problems have the efficiency and target-separation properties needed for the reduction.
\begin{lemma}[Efficient sample generation and target-value bounds]
\label{lem:sample-target}
Both $\mathcal P^{\cos,\pi}_n$ and $\mathcal P^{\exp,\beta}_n$ are efficiently samplable by classical randomized polynomial-time algorithms. Moreover, for every $(N,p,q)\in\operatorname{supp}(\widetilde{\mathcal{D}}_n)$, every $j\in[L(n)]$ and every $b\in\{0,1\}$, the input $x_{N,j,b}$ can be constructed in classical polynomial time from $N,j,b$ alone, without using $p$ or $q$. There exists a negligible function $\nu$ such that, for every such
triple, $j$, and $b$, writing $z_j:=\operatorname{bit}^{(L(n))}_j(\min\{p,q\})$, we have
\[
    \left|
        F^{\cos,\pi}_n(x_{N,j,b})-\bigl(1-2\mathbf{1}[b=z_j]\bigr)
    \right|
    \leq 2\nu(n),
\]
and
\[
    \left|
        F^{\exp,\beta}_n(x_{N,j,b})-
        \left(
            1-(1-e^{-\beta})\mathbf{1}[b=z_j]
        \right)
    \right|
    \leq
    (1-e^{-\beta})\nu(n).
\]
Moreover, every $(x,y)\in\operatorname{supp}(\mathcal P^{\cos,\pi}_n)$ satisfies $|y-F^{\cos,\pi}_n(x)|\le 2\nu(n)$, and every $(x,y)\in\operatorname{supp}(\mathcal P^{\exp,\beta}_n)$ satisfies $|y-F^{\exp,\beta}_n(x)|\le \nu(n)+2^{-n}$.
\end{lemma}
\begin{proof}
Define $a_{N,j,b}:=\Pr[U_{N,j,b}\text{ accepts on }\rho_0]$.
By construction, $U_{N,j,b}$ uses an amplified unitary implementation of Shor's factoring algorithm~\cite{Shor_1997}. Since $p$ and $q$ are distinct odd primes, the modulus $N=pq$ is an odd composite that is not a prime power. Intermediate measurements can be removed with polynomial overhead~\cite{aharonov_1998}, and the classical post-processing, factor verification, and comparison with $b$ can be implemented reversibly~\cite{bennett_1973}.

By the correctness of Shor's factoring algorithm and standard amplification, the number of repetitions can be chosen polynomially in $n$ so that, for every $(N,p,q)\in\operatorname{supp}(\widetilde{\mathcal D}_n)$, the amplified factoring procedure produces a valid nontrivial factorization of $N$ in the candidate registers with probability at least $1-\nu(n)$, where $\nu$ is negligible. Moreover, the reversible verification step ensures that any accepted
candidate factorization is a valid nontrivial factorization of $N$.

Since every valid nontrivial factorization of $N=pq$ yields $\min\{p,q\}$, we obtain
\[
    \left|
        a_{N,j,b}
        -
        \mathbf{1}[b=z_j]
    \right|
    \leq
    \nu(n).
\]
By Eq.~\eqref{eq:projector-hamiltonian}, we have $a_{N,j,b}=\operatorname{Tr}[H_{N,j,b}\rho_0]$.
Since $H_{N,j,b}$ is a projector, $\cos(\pi H_{N,j,b})=I-2H_{N,j,b}$. Therefore,
\[
    F_n^{\cos,\pi}(x_{N,j,b})
    =
    \operatorname{Tr}
    [\cos(\pi H_{N,j,b})\rho_0] 
    =
    1-2a_{N,j,b}.
\]
It follows that
\[
    \left|
        F_n^{\cos,\pi}(x_{N,j,b})
        -
        \bigl(1-2\mathbf{1}[b=z_j]\bigr)
    \right|
    =
    2\left|
        a_{N,j,b}-\mathbf{1}[b=z_j]
    \right|
    \le 2\nu(n).
\]
For $\mathcal L^{\exp,\beta}$, since $H_{N,j,b}$ is a projector, $e^{-\beta H_{N,j,b}}=I-(1-e^{-\beta})H_{N,j,b}$. Consequently,
\[
    F_n^{\exp,\beta}(x_{N,j,b})
    =
    \operatorname{Tr}
    [e^{-\beta H_{N,j,b}}\rho_0]
    =
    1-(1-e^{-\beta})a_{N,j,b}.
\]
It follows that
\[
    \left|
        F_n^{\exp,\beta}(x_{N,j,b})
        -
        \bigl(
            1-(1-e^{-\beta})\mathbf{1}[b=z_j]
        \bigr)
    \right|
    =
    (1-e^{-\beta})
    \left|
        a_{N,j,b}-\mathbf{1}[b=z_j]
    \right|
    \le
    (1-e^{-\beta})\nu(n).
\]

For the labels, $y^{\cos,\pi}=1-2\mathbf{1}[b=z_j]$, so the first bound immediately gives
\[
    |y^{\cos,\pi}-F^{\cos,\pi}_n(x_{N,j,b})|
    \le 2\nu(n).
\]
For $y^{\exp,\beta}$, using
$|\,\widetilde c_{\beta,n}-e^{-\beta}\,|\le 2^{-n}$ together with the second bound gives
\[
    |y^{\exp,\beta}-F^{\exp,\beta}_n(x_{N,j,b})|
    \le \nu(n)+2^{-n}.
\]

Finally, we verify the remaining efficiency claims. By Definition~\ref{def:RSA modulus distribution}, $(N,p,q)$ can be sampled in classical randomized polynomial time using $\operatorname{GenRSA}(1^n)$. Since $L(n)$ is polynomial-time computable and polynomially bounded, $j\in[L(n)]$ and $b\in\{0,1\}$ can also be sampled in classical randomized polynomial time. The circuit description of $U_{N,j,b}$, and hence $x_{N,j,b}$, can be generated in classical polynomial time from $N,j,b$ alone, without using $p$ or $q$. The label $y^{\cos,\pi}$ is computable from the generated factors $p,q$, and the label $y^{\exp,\beta}$ is also computable in classical polynomial time using $\widetilde c_{\beta,n}$.
\end{proof}

We now use Lemma~\ref{lem:sample-target} to complete the reduction.
Fix one of the two constructed problems $\mathcal L=(\mathcal X_n,F_n,\mathcal P_n)_{n\in\mathbb N}$, suppose that it is efficiently learnable, and let $\mathcal A$ be a corresponding learner. Set $\gamma_\beta:=\frac{\beta}{1+\beta}$, $\tau(n):=\frac{\gamma_\beta^2}{128L(n)}$, $\varepsilon(n):=\frac{\tau(n)}{200}$ and $\delta(n):=\frac{1}{200}$. Since $\beta>0$ is fixed, $\varepsilon(n)$ is inverse polynomial. On input $1^n$ and a challenge modulus $N\sim\mathcal D_n$, $\mathcal B$ proceeds as follows.

\begin{enumerate}
\item Independently of the challenge modulus $N$, generate an i.i.d.~training sample
\[
    S=
    \{(x_i,y_i)\}_{i=1}^{M(n,\varepsilon(n)^{-1},\delta(n)^{-1})}
    \sim
    \mathcal P_n^{\otimes M(n,\varepsilon(n)^{-1},\delta(n)^{-1})}
\]
using the sampler from Lemma~\ref{lem:sample-target}.
\item Run the learner $\mathcal A$ on input $(1^n,\varepsilon(n),\delta(n),S)$. Let $s$ be the resulting hypothesis description and define $g:=g_s$.
\item For every bit position $j\in[L(n)]$, construct the two inputs $x_{N,j,0}$ and $x_{N,j,1}$, and define 
\[
\widehat z_j(g)
    :=
    \begin{cases}
        1, & \text{if } g(x_{N,j,1})\le g(x_{N,j,0}),\\
        0, & \text{otherwise}.
    \end{cases}
\]
For brevity, write $\widehat z_j:=\widehat z_j(g)$.
\item Reconstruct
\[
    \widehat p:=\sum_{j=1}^{L(n)} 2^{j-1}\widehat z_j.
\]
\item Check whether $1<\widehat p<N$ and $\widehat p\mid N$. If the check succeeds, output $\left(\widehat p,\frac{N}{\widehat p}\right)$. Otherwise, output failure.
\end{enumerate}
The algorithm $\mathcal B$ runs in classical randomized polynomial time. Since $\varepsilon(n)^{-1}=\operatorname{poly}(n)$ and $\delta(n)^{-1}=200$, the sample size in Step~1 is polynomial in $n$. Each sample and each challenge input can be generated in polynomial time by Lemma~\ref{lem:sample-target}. By Definition~\ref{def:Efficient classical learner under squared loss}, $\mathcal A$ runs in polynomial time and the hypothesis $g$ is evaluable in polynomial time. Since $L(n)=\operatorname{poly}(n)$, the reconstruction and divisibility check also take polynomial time.

We next analyze the correctness of $\mathcal B$. Recall the definition of $z_j$ in Eq.~\eqref{eq:estimation-bit}. By Lemma~\ref{lem:sample-target},
\[
    F_n^{\cos,\pi}(x_{N,j,1-z_j})
    -
    F_n^{\cos,\pi}(x_{N,j,z_j})
    \ge
    2-4\nu(n),
\]
whereas
\[
    F_n^{\exp,\beta}(x_{N,j,1-z_j})
    -
    F_n^{\exp,\beta}(x_{N,j,z_j})
    \ge
    (1-e^{-\beta})(1-2\nu(n)).
\]
Since $\gamma_\beta<1$, $1-e^{-\beta}>\gamma_\beta$, and $\nu(n)$ is negligible, for all sufficiently large $n$, $F_n(x_{N,j,1-z_j})-F_n(x_{N,j,z_j})\ge\gamma_\beta$ for every $j\in[L(n)]$.
For a fixed challenge modulus $N$, define the squared loss conditioned on $N$ by
\begin{equation}
    R_N(g):=
    \mathbb{E}_{j\in[L(n)],\,b\in\{0,1\}}
    \left[
        \bigl(g(x_{N,j,b})-F_n(x_{N,j,b})\bigr)^2
    \right],
    \label{eq:conditional-risk}
\end{equation}
where $j$ and $b$ are chosen independently and uniformly at random. By construction, under the common input marginal $\mu_n$, $N\sim\mathcal D_n$, while $j\in[L(n)]$ and $b\in\{0,1\}$ are chosen independently and uniformly at random. Hence, $R_{\mu_n}(g)=\mathbb{E}_{N\sim\mathcal{D}_n}\left[R_N(g)\right]$. 
Suppose that $R_N(g)\le\tau(n)$. By Eq.~\eqref{eq:conditional-risk},
\[
    R_N(g)
    =
    \frac{1}{2L(n)}
    \sum_{j\in[L(n)]}
    \sum_{b\in\{0,1\}}
    \left(
        g(x_{N,j,b})-F_n(x_{N,j,b})
    \right)^2.
\]
Since all $2L(n)$ summands are nonnegative, for every $j\in[L(n)]$ and $b\in\{0,1\}$,
\[
    \left|
        g(x_{N,j,b})-F_n(x_{N,j,b})
    \right|
    \leq
    \sqrt{2L(n)\tau(n)}
    =
    \frac{\gamma_\beta}{8}.
\]
Together with the target-separation bound above, this gives
\[
\begin{aligned}
    g(x_{N,j,1-z_j})-g(x_{N,j,z_j})
    &\ge
    F_n(x_{N,j,1-z_j})-F_n(x_{N,j,z_j})
    -
    \left|
        g(x_{N,j,1-z_j})-F_n(x_{N,j,1-z_j})
    \right|
    -
    \left|
        g(x_{N,j,z_j})-F_n(x_{N,j,z_j})
    \right|
    \\
    &\ge
    \gamma_\beta-\frac{\gamma_\beta}{4}
    =
    \frac{3\gamma_\beta}{4}
    >
    0.
\end{aligned}
\]
Hence, $g(x_{N,j,z_j})<g(x_{N,j,1-z_j})$, so the comparison rule in Step~3 gives $\widehat z_j=z_j=\operatorname{bit}^{(L(n))}_j(\min\{p,q\})$ for every $j\in[L(n)]$.

It remains to show that $R_N(g)\le \tau(n)$ holds with sufficiently high probability. By Definition~\ref{def:Efficient classical learner under squared loss}, with probability at least $1-\delta(n)$ over $S$ and the randomness of $\mathcal A$, $R_{\mu_n}(g) \le \varepsilon(n)$.
Since $g$ is generated independently of the challenge modulus $N$, for any fixed realization of $g$ satisfying this bound, Markov's inequality gives
\[
    \Pr_{N\sim \mathcal D_n}
    [R_N(g)>\tau(n)]
    \le
    \frac{\varepsilon(n)}{\tau(n)}= \frac{1}{200}.
\]
Therefore, over the challenge modulus $N$, the training sample $S$, and the internal randomness of $\mathcal A$,
\[
    \Pr_{N\sim \mathcal D_n,S,\mathcal A}
    [R_N(g)>\tau(n)]
    \le
    \delta(n)
    +
    \frac{\varepsilon(n)}{\tau(n)}
    =
    \frac{1}{100}.
\]
Since $\mathcal D_n$ is the marginal of $\widetilde{\mathcal D}_n$ on $N$, the same probability bound holds when $(N,p,q)$ is drawn from $\widetilde{\mathcal D}_n$.
Whenever $R_N(g)\le\tau(n)$, we have $\widehat p=\min\{p,q\}$.
Consequently,
\[
    \Pr_{(N,p,q)\sim\widetilde {\mathcal D}_n,S,\mathcal A}
    [\widehat p=\min\{p,q\}]
    \ge
    \frac{99}{100}
    >
    \frac{2}{3}.
\]
Whenever $\widehat p=\min\{p,q\}$, the final divisibility check succeeds and $\mathcal B$ outputs
$
    \left(
        \widehat p,
        \frac{N}{\widehat p}
    \right)
    =
    (\min\{p,q\},\max\{p,q\}).
$
Thus, for all sufficiently large $n$,
\[
\Pr_{\substack{(N,p,q)\sim\widetilde{\mathcal{D}}_n,\, \mathcal{B}}}
\left[
\exists\,p',q'\in\mathbb{N}:
\mathcal{B}(1^n,N)=(p',q'),\
1<p',q'<N,\
p'q'=N
\right]
\ge \frac{99}{100}
> \frac{2}{3}.
\]
This proves Theorem~\ref{thm:main}.

\end{proof}

\begin{remark}\label{rem:generalization}
As a final remark, we note that the reduction in Theorem~\ref{thm:main} is not specific to integer factorization and actually applies more generally than stated in Theorem~\ref{thm:main}.
The same argument works whenever there is a polynomial-length witness $w(u)$ uniquely determined by an input $u$ such that ($i$) pairs $(u,w(u))$ are samplable in classical polynomial time, ($ii$) $w(u)$ can be recovered from $u$ with bounded error in quantum polynomial time and a candidate witness can be verified in classical polynomial time, and ($iii$) recovering $w(u)$ from $u$ drawn from the marginal distribution induced by the sampler in ($i$) is hard on average for classical randomized polynomial-time algorithms. By standard amplification, the recovery error in ($ii$) can be made negligible. In particular, this applies to injective one-way functions secure against classical randomized polynomial-time algorithms that are invertible in quantum polynomial time.
\end{remark}